\theoremstyle{plain}
\newtheorem*{theorem*}{Theorem}
\newtheorem*{definition*}{Definition}
\newtheorem*{postulate*}{Postulate}
\theoremstyle{named}
\begin{document}

\title{A no-go theorem for observer-independent facts}
\author{\v{C}aslav Brukner}
\affiliation{Vienna Center for Quantum Science and Technology (VCQ), Faculty of Physics, University of Vienna, Boltzmanngasse 5, A-1090 Vienna, Austria}
\affiliation{Institute of Quantum Optics and Quantum Information (IQOQI), Austrian Academy of Sciences, Boltzmanngasse 3, A-1090 Vienna, Austria}
\date{\today}

\begin{abstract}

In his famous thought experiment, Wigner assigns an entangled state to the composite quantum system made up of Wigner's friend and her observed system. While the two of them have different accounts of the process, each Wigner and his friend can in principle verify his/her respective state assignments by performing an appropriate measurement.  As manifested through a click in a detector or a specific position of the pointer, the outcomes of these measurements can be regarded as reflecting directly observable "facts". Reviewing \href{https://arxiv.org/abs/1507.05255}{arXiv:1507.05255}, I will derive a no-go theorem for observer-independent facts, which would be common both for Wigner and the friend. I will then analyze this result in the context of a newly derived theorem \href{https://arxiv.org/abs/1604.07422}{arXiv:1604.07422}, where Frauchiger and Renner prove that "single-world interpretations of quantum theory cannot be self-consistent". It is argued that "self-consistency" has the same implications as the assumption that observational statements of different observers can be compared in a single (and hence an observer-independent) theoretical framework. The latter, however, may not be possible, if the statements are to be understood as relational in the sense that their determinacy is relative to an observer. 

\end{abstract}

\maketitle

\subsection{Introduction}

%

One of the most debated situations concerning the quantum measurement problem is described in the thought experiment of so called ``Wigner's friend''. The experiment involves a quantum system and an observer (Wigner's friend) who performs measurements on this system in a sealed laboratory. A ``super-observer'' (Wigner) is placed outside the  laboratory. While for the friend the measurement outcome is reflected in a property of the device recording it (e.g. in the form of a click in a photo-detector or a certain position of a pointer device), Wigner can describe the process unitarily on the basis of the information that is in principle available to him. At the end of the process, the friend projects the state of the system corresponding to the observed outcome, whereas Wigner assigns a specific entangled state to the system and the friend, which he can verify performing a further experiment. When Wigner's friend observes an outcome, does the state collapse for Wigner too? And if not, how can we reconcile their different accounts of the process?

The thought experiment of Wigner's friend has great conceptual value, as it challenges different approaches to understanding quantum theory. In his original work~\cite{wigner}, Wigner designed the experiment to support his view that consciousness is necessary to complete the quantum measurement process. According to many-worlds interpretation~\cite{everett}, there are many copies of Wigner's friend in different "worlds". Each copy observes one outcome, a different one in each world. According to the Copenhagen, relational~\cite{rovelli} and Quantum Bayesian~\cite{chris} interpretations, the state is defined only relative to observer; relative to the friend, the state is projected, while relative to Wigner, it is in a superposition. Either way, supporters of any of these interpretation, will arrive at the same predictions in Wigner's verifying experiment. In contrast, objective collapse theories~\cite{grw,diosi,penrose} predict that the quantum state collapses when a superposed system reaches a certain threshold of mass, size, complexity etc., such that it becomes impossible to even prepare the entangled state of Wigner's friend and the system. Consequently, Wigner's state assignment can statistically be disproved repeating the verifying experiment. 

The descriptions of "what is happening inside the lab" as given by Wigner and Wigner's friend respectively will differ. This difference need not pose a consistency problem for quantum theory, for example, if one takes the view that the theory gives the physical description {\em relative} to observer and her/his measuring apparatus in agreement with Ref.~\cite{rovelli}. As long as the two observers do not exchange the information about their outcomes, they will remain separated form each other, each holding a different description of the systems with respect to their individual experimental arrangements. And if they do compare their predictions, they will agree. For example, should the friend communicate her result to Wigner, this would collapse the state he assigns to the friend and the system. This suggests that there should be no tension in accepting that, relative to their experimental arrangements, Wigner's friend in her measurement, as well as Wigner in his verifying measurement, each obtains a respective measurement outcome. Since these outcomes are usually manifested as clicks in detectors or definite positions of a pointer, they can be considered as directly accessible  "facts". Quite naturally, the question arises: Can the facts as observed by Wigner and by Wigner's friend be jointly considered as objective properties of the world, in which case we might call them "facts of the world"? What we mean with this question is whether there exists any theory, potentially different from quantum theory, where a {\it joint} probability may be assigned for Wigner's outcome and for that of his friend. 

Reviewing the results of Ref.~\cite{brukner}, I will derive a Bell type no-go theorem for observer-independent facts, showing that there can be no theory in which Wigner's and Wigner's friend' facts can jointly be considered as (local) objective properties. More precisely, I will show that the assumptions of "locality", "freedom of choice" and "universality of quantum theory" (the latter in the sense that there are no constrains of the system to which the theory can be applied) are incompatible with the assumption of observer-independent facts, i.e. under the assumptions one cannot define joint probabilities for Wigner's outcome and for that of Wigner's friend. This might indicate that in quantum theory, we can only define facts relative to an observation and an observer. I will then analyze the relation of these results to the theorem developed by Frauchinger and Renner~\cite{renner}, which proves that "single-world interpretations of quantum theory cannot be self-consistent". In particular, I will argue that implications of their "self-consistency" requirement are equivalent to that of a theoretical framework in which the truth values of the observational  statements by Wigner and Wigner's friend can be jointly assigned and then verified whether they are consistent or not. However, in the view of the no-go theorem, this in general need not be possible in a physical theory; the theory may operate only with facts relative to observer. 

It should be emphasized that  the no-go theorem applies to "facts" understood as "immediate experiences of observers"; it may refer to what various interpretations of quantum mechanics assume to be "real" (e.g. the wave function of the Universe, Bohmian's trajectories etc.) only to the extent to which these "realities" give rise to directly observable facts in terms of detector clicks or pointer positions.

\subsection{Deutsch's version of Wigner's friend experiment}

The standard description of the Wigner-friend thought experiment involves a quantum two-level system (system 1, e.g. a spin-1/2 particle) which can give rise to two outcomes upon measurement (e.g. two opposite directions when passing through a Stern-Gerlach apparatus). The outcomes are recorded by a measurement apparatus and eventually in the friend's memory (system 2). Now, Wigner is placed outside the isolated laboratory in which the experiment takes place and can perform a quantum measurement on the overall system (spin-1/2 particle + friend's laboratory). Take it that all experiments are carried out a sufficient number of times to collect statistics. 

For concreteness, suppose that a measurement of spin along $z$ is performed on a particle initially prepared in state $|x+\rangle_S=\frac{1}{\sqrt2}(|z+\rangle_S+|z-\rangle_S)$, where subscript $S$ refers to the spin. After  the measurement is completed, the measurement apparatus is found in one of many perceptively different macroscopic configurations, like different positions of a pointer along a scale. If the apparatus pointer is found in a specific position along the scale, the friend can say that the observable spin $z$ has the value "up" or "down". Note that for the present argument, we need not make any assumption about how the friend formally describes the spin and the apparatus or even if she uses quantum theory for it. All what is needed is the assumption that the friend perceives a definite outcome.

Wigner uses quantum theory to describe the friend's measurement. From his perspective, the measurement is described by a unitary transformation. The different possible spin states $|z+\rangle_S$ and $|z-\rangle_S$ are supposed to get entangled to the perceptively different macroscopic configurations of the apparatus and the parts of the laboratory including the friend's memory. The states of different macroscopic configurations are represented by orthogonal states  $|F_{z+}\rangle_F$ and $|F_{z-}\rangle_F$, respectively. We assume that the state of the composite system "spin + friend's laboratory" is given by: 
\begin{equation}
|\Phi\rangle_{SF}= \frac{1}{\sqrt2} \left(|z+\rangle_S|F_{z+}\rangle_F+|z-\rangle_S|F_{z-}\rangle_F \right),
\label{eq:wignerstate}
\end{equation}
where the particular phase (here "+") between the two amplitudes in Eq.~(\ref{eq:wignerstate}) is specified by the measurement interaction in control of Wigner. (Note that if Wigner would not know this phase due to lack of control of it, he would describe the "spin + friend's laboratory" in an incoherent mixture of the two possibilities.) Wigner can verify his state assignment~(\ref{eq:wignerstate}), for example, by performing a Bell state measurement in the basis: $|\Phi^{\pm}\rangle_{SF}=\frac{1}{\sqrt2} \left(|z+\rangle_S|F_{z+}\rangle_F \pm |z-\rangle_S|F_{z-}\rangle_F \right)$ and $|\Psi^{\pm}\rangle_{SF}=\frac{1}{\sqrt2} \left(|z+\rangle_S|F_{z-}\rangle_F \pm |z-\rangle_S|F_{z+}\rangle_F \right)$. 

The fact that the friend and Wigner have {\it different} accounts of the friend's measurement process is at the heart of the discussion surrounding the Wigner-friend thought experiment. Still the difference needs not give rise to any inconsistency in practicing quantum theory, since the two descriptions belong to two different observers, who remain separated in making predictions for their respective systems. The novelty of Deutsch's proposal~\cite{deutsch} lies in the possibility for Wigner to acquire direct knowledge on whether the friend has observed a definite outcome upon her measurement or not without revealing what outcome she has observed. The friend could open the laboratory in a manner which allowed communication (e.g. a specific message written on a piece of paper) to be passed outside to Wigner, keeping all other degrees of freedom fully isolated. Obviously, it is of central importance that the message does not contain any information concerning the specific observed outcome (which would destroy the coherence of state~(\ref{eq:wignerstate})), but merely an indication of the kind: "I have observed a definite outcome" or "I have {\em not} observed a definite outcome". If  the message is encoded in the state of system $M$, the overall state is:
\begin{equation}
|\Phi\rangle_{SFM}= \frac{1}{\sqrt2} \left(|z+\rangle_S|F_{z+}\rangle_F+|z-\rangle_S|F_{z-}\rangle_F \right) |\textrm{"I have observed a definite outcome"}\rangle_M,
\end{equation}
since the state of the message is factorized out from the total state. (I leave the option for the message "I have {\em not} observed a definite outcome" out, as it conflicts our experience of the situation that we refer to as measurement and it also can be used to violate the bound on quantum state discrimination~\cite{brukner}.)


If we assume universality of quantum theory in the sense that it can be applied at any scale, including the apparatus, the entire laboratory and even the observer's memory, we conclude that the message will indicate that the friend perceives a definite outcome and yet Wigner will confirm  his state assignment~(\ref{eq:wignerstate}). This should be contrasted to the "collapse models" by Ghirardi, Rimini and Weber~\cite{grw} or by Diosi~\cite{diosi} and Penrose~\cite{penrose}, which predict a breakdown of the quantum-mechanical laws at some scale. In presence of such a collapse, the prediction based on Wigner's state assignment will statistically deviate from the result obtained in the verification test.  

\begin{figure}[h]
\centering\includegraphics[width=9.5cm]{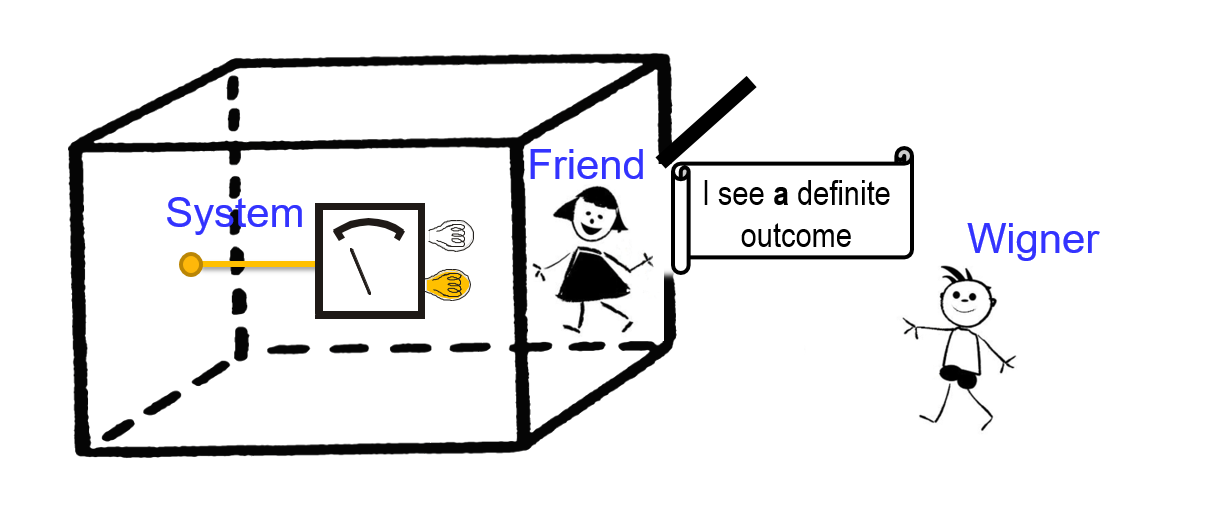}
\caption{\small{Deutsch's version of the Wigner-friend thought experiment. An observer (Wigner's friend) performs a Stern-Gerlach experiment on a spin 1/2 particle in a sealed laboratory. The outcome, either ``spin up'' or ``spin down'', is recorded in the friend's laboratory, including her memory. A super-observer (Wigner) describes the entire experiment as a unitary transformation resulting in an encompassing entangled state between the system and the friend's laboratory. The friend is allowed to communicate a message which only reports whether she sees a definite outcome or not, without in any way revealing the actual outcome she observes.}}
\label{wignersfriend}
\end{figure}


\subsection{The no-go theorem}

We have seen that Wigner not only perceives his own facts. He is also able to obtain a {\em direct evidence for the existence of the friend's facts} (although without knowing which specific outcome has been realized in the laboratory). This strongly suggests that Wigner's and Wigner's friend' facts coexist. We pose the question: Is there a theoretical framework, potentially going beyond quantum theory, in which one can account for observer-independent facts, ones that hence can be called  "facts of the world"? In such a framework, one could assign jointly truth values to both the observational statement $A_1$: "The pointer of Wigner's friend' apparatus points to result $z+$" and $A_2$: "The pointer of Wigner's apparatus points to result $\Phi$." 

One important remark: Whenever Wigner performs his measurement, he can inform the friend about the outcome he observed. Hence, Wigner's friend can learn Wigner's outcome in addition to the outcome she herself observed directly. In this way, Wigner {\it can} know the truth values of both statements $A_1$ and $A_2$. The assumption of "observer-independent facts" is a stronger condition: we require an assignment of truth values to statements $A_1$ and $A_2$ {\it independently} of which measurement Wigner performs. Wigner can either perform his verifying experiment, or he can perform Wigner's friend's measurement (for example by opening the lab, or learning it from the friend). In either experiment, the observed outcome (e.g. "$\Phi$" and "$z+$", respectively) is required to reveal the assigned truth value for $A_1$ or $A_2$. We formalize the requirement of "observer-independent facts" in the following assumption.

\begin{postulate*} {\em ("Observer-independent facts")} The truth values of the propositions $A_i$ of all observers form a Boolean algebra $\mathcal{A}$. Moreover, the algebra is equipped with a (countably additive) positive measure $p(A) \geq 0$ for all statements $A \in \mathcal{A}$, which is the probability for the statement to be true.
\label{factspostulate}
\end{postulate*}

In the proof, we will only use the conjunction of propositions of different observers, which is a weaker requirement. Furthermore, we use a countably additive measure since we are dealing with only a countable (in fact only a finite) set of elements. In Boolean algebra, one can build the conjunction, the disjunction and the negation of the statements . A typical example of a Boolean algebra is set theory. The operations are identified with the set theoretic intersection, union, and complement respectively. This is significant in the context of classical physics, where the propositions can be represented by subsets of a phase space. In the present context, one can jointly assign truth values "true" or "false" to statements $A_1$ and $A_2$ about observations made by Wigner's friend and Wigner, respectively. Moreover, one can build the conjunction $A_1 \cap A_2$ and assign  joint probability $p(A_1=\pm 1, A_2=\pm 1)$, where $A_1$ is observed by the friend and $A_2$ by Wigner (and where truth value "true" corresponds to value 1 and "false" to -1). Note that since observables corresponding to $A_1$ and $A_2$ do not commute with each other, this amounts to introducing of "hidden variables", for which we now formulate a Bell's theorem. 

\begin{theorem*}
{\em (No-go theorem for "observer-independent facts")} The followings statements are incompatible (i.e. lead to a contradiction)
\begin{enumerate}
\item \emph{"Universal validity of quantum theory"\footnote{Here we use the word "universal" in the sense of Peres~\cite{peres}: "There is nothing in quantum theory making it applicable to three atoms and inapplicable to $10^{23}$ ... Even if quantum theory is universal, it is not {\it closed}. A distinction must be made between {\it endophysical} systems -- 
those which are described by the theory -- and {\it exophysical} ones, which lie outside the domain of the theory (for example, the telescopes and photographic plates used by astronomers for verifying the laws of celestial mechanics). While quantum theory can in principle describe {\it anything}, a quantum description cannot include {\it everything}. In every physical
situation {\it something} must remain unanalyzed. This is not a flaw of quantum theory, but a logical necessity ..."}.} Quantum predictions hold at any scale, even if the measured system contains objects as large as an "observer" (including her laboratory, memory etc.).
\item \emph{``Locality''.} The choice of the measurement settings of one observer has no influence on the outcomes of the other distant observer(s).
\item \emph{``Freedom of choice''.} The choice of measurement settings is statistically independent from the rest of the experiment.
\item \emph{``Observer-independent facts''\footnote{The theorem can be derived by replacing assumptions 2, 3 and 4 with a single assumption of Bell's {\em "local causality"}. The latter already implies the existence of (local) probabilities for "joint facts" for Wigner and Wigner's friend~\cite{zb}, which is the subject of the present no-go theorem. The reason for working with the present choice of assumptions is that the relevance of the theorem for the propositions different observers make about their respective outcome become apparent.}.} One can jointly assign truth values to the propositions about observed outcomes (``facts'') of different observers (as specified in the postulate above). 
\end{enumerate}
\end{theorem*}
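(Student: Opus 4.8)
The plan is to reduce the incompatibility of the four assumptions to a violation of a Bell--CHSH inequality, with the role of the "hidden variables" played by the observer-independent facts themselves. First I would lift the single-laboratory setup of the previous section to a bipartite one: two sealed laboratories, each containing a friend who measures one particle of an entangled pair (say the spins prepared in a singlet-like state $\frac{1}{\sqrt2}(|z+\rangle_S|z-\rangle_{S'}-|z-\rangle_S|z+\rangle_{S'})$), and two super-observers, Wigner~1 and Wigner~2, each acting on one entire laboratory. Each Wigner~$k$ freely chooses between two measurements: opening the box to read his friend's recorded spin (the fact $A_1$ resp.\ $B_1$), or performing a verification-type measurement in a suitable Bell-like basis of the full "spin $+$ friend" system (the fact $A_2$ resp.\ $B_2$). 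This gives exactly the CHSH data: two dichotomic settings per party, with outcomes $A_1,A_2\in\{+1,-1\}$ for Wigner~1 and $B_1,B_2\in\{+1,-1\}$ for Wigner~2.

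Next I would extract a joint distribution from Assumption~4. The Observer-independent-facts postulate places the propositions about all four outcomes in a single Boolean algebra carrying a measure $p$, so there is a well-defined joint distribution $p(A_1,A_2,B_1,B_2)$ over the four $\pm1$-valued facts --- despite the fact that the two settings of each Wigner correspond to \emph{non-commuting} observables, which is exactly why this amounts to postulating hidden variables. Assumption~3 (freedom of choice) then guarantees that this distribution does not depend on which settings the Wigners happen to select, and Assumption~2 (locality) guarantees that the outcome on one side is insensitive to the distant setting. Together these ensure that each empirically accessible correlator $\langle A_iB_j\rangle$, obtained when Wigner~1 chooses $i$ and Wigner~2 chooses $j$, is precisely the corresponding marginal of the \emph{single} underlying distribution $p$. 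Since for $\pm1$ variables one has the algebraic identity $A_1(B_1+B_2)+A_2(B_1-B_2)=\pm2$, averaging over $p$ yields the CHSH bound
\[
S := \langle A_1 B_1\rangle + \langle A_1 B_2\rangle + \langle A_2 B_1\rangle - \langle A_2 B_2\rangle \le 2 .
\]

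I would then invoke Assumption~1 to evaluate the same correlators within quantum theory applied universally to the laboratories. Because "reading the friend's outcome" and "Wigner's verification" are both modelled as genuine quantum measurements on the entangled "spin $+$ friend" systems, the four observables correlate exactly as two projective spin measurements on a maximally entangled pair; choosing the four measurement directions at the standard CHSH angles makes the quantum prediction exceed the bound, reaching up to the Tsirelson value $S=2\sqrt2>2$. This contradicts the inequality forced by Assumptions~2--4, which establishes the claimed incompatibility.

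The step requiring the most care --- and the main obstacle --- is the middle one: showing that the three assumptions genuinely collapse to one setting-independent joint distribution whose marginals are the observed correlators. The Observer-independent-facts postulate by itself only guarantees joint truth values for the facts realised in a given run (one per laboratory); promoting this to a distribution over all four \emph{counterfactual} values at once is exactly where freedom of choice and locality must be deployed, and one must verify that no tacit signalling or setting-dependence has been smuggled into the correlators. The quantum computation, by contrast, is routine once universality licenses treating Wigner's super-measurement on a whole laboratory on the same footing as an ordinary spin measurement.
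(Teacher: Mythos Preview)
Your proposal is correct and follows essentially the same route as the paper: lift the single Wigner--friend setup to a bipartite Bell scenario with two sealed laboratories and two super-observers, identify the ``read the friend's $z$-outcome'' and the ``Wigner verification'' measurements as the two settings on each side, use Assumptions~2--4 to obtain a single joint distribution $p(A_1,A_2,B_1,B_2)$ and hence the CHSH bound, and then use Assumption~1 to compute the quantum correlators and exhibit a violation.

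One small technical point worth tightening: with the pure singlet and \emph{both} friends measuring along $z$, the two ``friend'' settings $A_1$ and $B_1$ are forced to coincide, so the four directions cannot all be placed at the standard CHSH angles; the best you then get is $|S|=5/2$, not $2\sqrt2$. The paper handles this by preparing not the singlet but the rotated state $(\mathbbm{1}\otimes e^{-\frac{i}{2}\theta\sigma_y})|\psi^-\rangle$ with $\theta=\pi/4$, so that the effective directions become $0^\circ,90^\circ$ on one side and $45^\circ,135^\circ$ on the other, yielding exactly $S=2\sqrt2$. Either way the inequality is violated and the contradiction follows, so this does not affect the validity of your argument.
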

\begin{proof}

\begin{figure}[h]
\centering
\includegraphics[width=13.5cm]{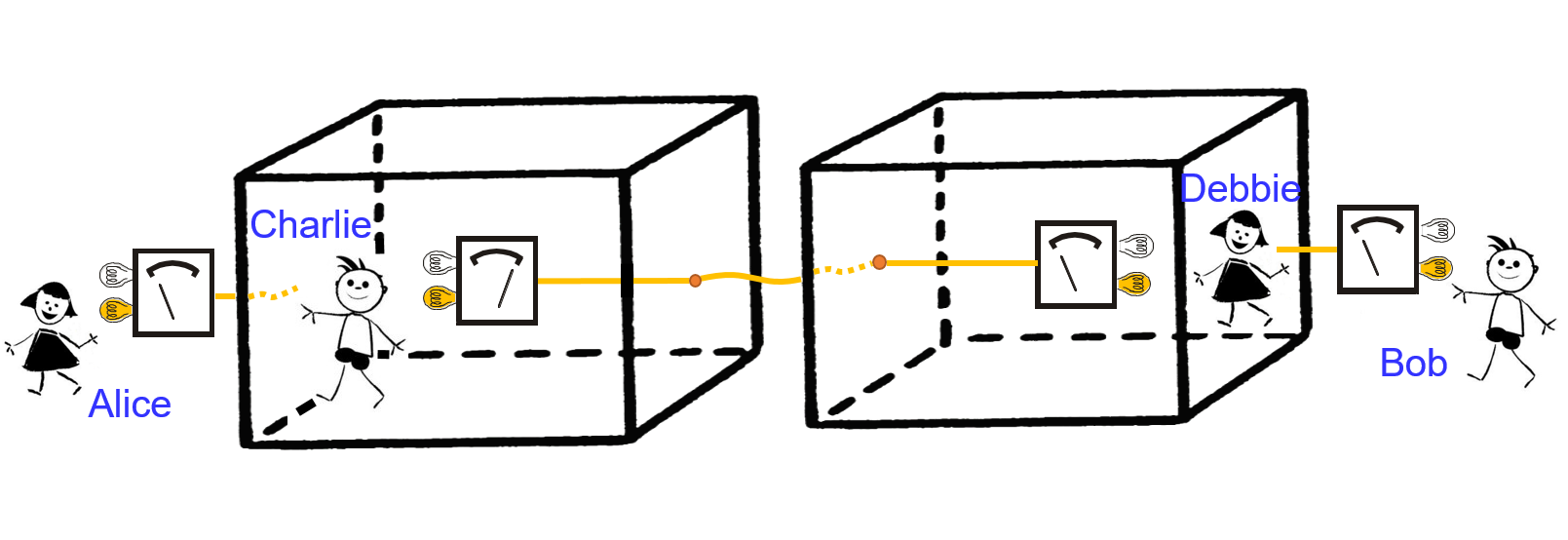}
\caption{\small {A Bell experiment on two entangled observers in a Wigner-friend scenario. The super-observers Alice and Bob perform their respective measurements on laboratories containing the observers Charlie and Debbie, who both perform a Stern-Gerlach measurement on their respective spin-1/2 particles.}}
\label{bellfig}
\end{figure}

With reference to Fig.~\ref{bellfig}, consider a pair of super-observers (Alice and Bob), who can carry out experiments on two systems that include a laboratory for each system, in each of which an observer (Charlie and Debbie, respectively) performs a measurement on a spin-1/2 particle. We consider a Bell inequality test and assume that Alice chooses between two measurement settings $A_1$ and $A_2$, and similarly Bob chooses between $B_1$ and $B_2$. The settings $A_1$ and $A_2$ correspond to the observational statements  Charlie  and Alice can make about their  respective outcomes, respectively. Similarly, the settings $B_1$ and $B_2$ correspond to observational statements of Debbie and Bob respectively. The assumptions (2), (3) and (4) together amount for the existence of local hidden variables that predefine the values for $A_1$, $A_2$, $B_1$ and $B_2$ to be $+1$ or $-1$. Moreover, the assumptions imply the existence of the joint probability $p(A_1,A_2,B_1,B_2)$ whose marginals satisfy the Clauser-Horne-Shimony-Holt inequality (CHSH): $S = \langle A_1B_1 \rangle + \langle A_1B_2 \rangle+\langle A_2B_1 \rangle-\langle A_2B_2 \rangle\leq 2$. Here, for example, $\langle A_1 B_1 \rangle = \sum_{A_1,B_1=-1,1} A_1 B_1 p(A_1, B_1)$ and $p(A_1,B_1) =\sum_{A_2,B_2=-1,1} p(A_1,A_2,B_1,B_2)$ and similarly for other cases. 

Suppose that Charlie and Debbie initially share an entangled state of two spin-1/2 particles $S_1$ and $S_2$ in a state
\begin{equation}
|\psi\rangle_{S_1S_2}= -\sin\frac{\theta}{2} |\phi^+\rangle_{S_1S_2} + \cos\frac{\theta}{2} |\psi^-\rangle_{S_1S_2},
\end{equation}
where $|\phi^+\rangle_{S_1S_2} = \frac{1}{\sqrt{2}} (|z+\rangle_{S_1}|z+\rangle_{S_2} + |z-\rangle_{S_1}|z-\rangle_{S_2})$ and $|\psi^-\rangle_{S_1S_2} = \frac{1}{\sqrt{2}} (|z+\rangle_{S_1}|z-\rangle_{S_2} - |z-\rangle_{S_1}|z+\rangle_{S_2})$ and the first spin is in possession of Charlie and the second of Debbie. The state can be obtained by applying rotation $(\mathbbm{1} \otimes e^{-\frac{i}{2} \theta \sigma_y})|\psi^-\rangle_{S_1S_2}$ to the singlet state $|\psi^-\rangle_{S_1S_2} = \frac{1}{\sqrt{2}} (|z+\rangle_{S_1}|z-\rangle_{S_2} - |z-\rangle_{S_1}|z+\rangle_{S_2})$, where $\theta$ is the angle of rotation of Debbie's spin around $y$-axis and $\sigma_y$ is a Pauli matrix. This particular choice of the state enables all measured observabes to be either of the Wigner's friend' type, or of Wigner's type.
 
For Alice and Bob, the overall state of the spins together with Charlie's and Debbie's laboratories is initially
\begin{equation}
|\Psi_0\rangle= |\psi\rangle_{S_1S_2} |0\rangle_C |0\rangle_D,
\end{equation}
in agreement with assumption 1. The state $|0\rangle_C |0\rangle_D$ of the two observers does not require further characterization, except for the description of observers capable of completing a measurement. 

Now Charlie and Debbie each perform a measurement of the respective spin along the $z$ direction. This measurement procedure is described as an unitary transformation from the point of view of Alice and Bob. We assume that after Charlie and Debbie complete their measurement, the overall state becomes
\begin{equation}
\label{ent}
|\tilde{\Psi}\rangle = -\sin\frac{\theta}{2} |\Phi^+\rangle + \cos\frac{\theta}{2} |\Psi^-\rangle,
\end{equation}
where 
\begin{align}
|\Phi^+\rangle = \frac{1}{\sqrt{2}} (|A_{up}|B_{up}\rangle + |A_{down}\rangle|B_{down}\rangle), \\ 
|\Psi^-\rangle = \frac{1}{\sqrt{2}} (|A_{up}\rangle |B_{down}\rangle - |A_{down}\rangle |B_{up}\rangle )
\end{align}
and 
\begin{align}
|A_{up}\rangle=|z+\rangle_{S_1}|C_{z+}\rangle_{C}, && |B_{up}\rangle=|z+\rangle_{S_2}|D_{z+}\rangle_{D}, \\
|A_{down}\rangle=|z-\rangle_{S_1}|C_{z-}\rangle_{C}, && |B_{down}\rangle =|z-\rangle_{S_2}|D_{z-}\rangle_{D}.
\end{align}

We take now $\theta=\pi/4$ and define two sets of (binary) observables which play the same role of a spin (Pauli) operators along the $z$ and $x$ axis respectively: $A_z=|A_{up}\rangle\langle A_{up}|-|A_{down}\rangle\langle A_{down}|$ and $A_x=|A_{up}\rangle\langle A_{down}|+|A_{down}\rangle\langle A_{up}|$ for Alice and similarly $B_z$ and $B_x$ for Bob. In the Bell experiment, Alice chooses between $A_1=A_z$ and $A_2=A_x$, whereas Bob chooses between $B_1=B_z $ and $B_2=B_x$. Note that Alice and Bob each choose between the friend's ($A_1$ and $B_1$) and Wigner's ($A_2$ and $B_2$) type of measurement. The Bell test with these measurement settings and state~(\ref{ent}) results in $S_Q=2\sqrt{2}$. The violation of the inequality implies that the conjunction of the assumptions (1-4) used to derive it is untenable.
\end{proof}

In Appendix we present a Greenberger-Horne-Zeilinger type of the theorem with three Wigners and three friends. There the discrepancy between quantum theory and the theories respecting (2-4) is no more of probabilistic, but deterministic nature. 

We conclude that Wigner, even as he has a clear evidence for the occurrence of a definite outcome in the friend's laboratory, {\em cannot} assume any specific value for the outcome to {\em coexist} together with the directly observed value of his outcome, given that all other assumptions are respected. Moreover, there is no theoretical framework where one can assign jointly the truth values to observational propositions of different observers (they cannot build a single Boolean algebra) under these assumptions. A possible consequence of the result  is that there cannot be \emph{facts of the world per se, but only relative to an observer} in agreement with Rovelli's relative-state interpretation~\cite{rovelli}, Quantum Bayesianism\footnote{Already in 1996 in the "Replies to Referee 4" of  Ref.~\cite{fuchs}, Fuchs drew a distinction between "facts for the agent" and "facts for everybody".}~\cite{chris}, as well as the (neo)-Copenhagen interpretation~\cite{brukner}. It is interesting to note that a similar view was expressed by Jammer as early as in 1974~\cite{jammer}, when he wrote that "the description of the state of a system, rather than being restricted to the particle (or systems of particles) under observation, expresses a relation between the particle and all the measurement devices involved." Other possible interpretations of violation of Bell's inequalities include violations of assumption 1 in collapse models~\cite{grw,diosi,penrose}, of assumption 2 in non-local hidden variable models such as de Broglie-Bohm theory~\cite{bohm} or of assumption 3. in superdeterministic theories~\cite{thooft}. The proper account of the result in the many-worlds interpretation should be found in the interpretation’s account of Bell's inequality violation~\cite{harvey,mateus} and points again to observer-dependent facts as they depend on the branch of the many worlds.

\subsection{Relation to the paper by Frauchiger and Renner, arXiv: 1604.07422}

Building upon works by Deutsch~\cite{deutsch}, Hardy~\cite{hardy92,hardy93} and Ref.~\cite{brukner} reviewed above, Frauchiger and Rennen~\cite{renner} proposed an "extended Wigner-friend thought experiment", from which they concluded that "single-world interpretations of quantum theory cannot be self-consistent". The implications of their argument have been discussed since then~\cite{veronika,sudbery,chris,bub}.

The claim of Ref.~\cite{renner} is based on an incompatibility proof stating that there cannot exist a physical theory T that would fulfill the following three properties (informal versions, see Ref.~\cite{renner} for details): 
\begin{description}
\item[(QT)] "{\it Compliance with quantum theory}: T forbids all measurement results that are forbidden by standard quantum theory (and this condition holds even if the measured
system is large enough to contain itself an experimenter)."

\item[(SW)] "{\it Single-world}:  T rules out the occurrence of more than one single outcome if an experimenter measures a system once."

\item[(SC)] "{\it Self-consistency}: T's statements about measurement outcomes are logically consistent (even if they are obtained by considering the perspectives of different experimenters)."
\end{description}
Property (QT) is essentially a weaker version of our assumption 1 where it is sufficient to require the validity of quantum theory for results with vanishing probability (as the argument is possibilistic, not probabilistic).  An example of a theory violating property (SW) is the many-worlds interpretation of quantum theory.

The argument combines a set of statements that involves different observers F$_1$, F$_2$, A and W and can be drawn on the basis of theory T: 
\begin{description}[labelindent=25pt,style=multiline,leftmargin=1.6cm]
\item [S$_1$] If F$_1$ sees $r=t$, then W sees $w \neq ok$.
\item [S$_2$] If F$_2$ sees $z=+$, then F$_1$ sees $r= t$.
\item [S$_3$] If A sees $x=ok$, then F$_2$ sees $z=+$.
\item [S$_4$] W sees $w= ok$ and is told by A that $x =ok$. 
\end{description}
The specific type of quantum state, measurements and outcomes involved in the argument is not relevant for further discussion and will be omitted here. 

Property (SC) is crucial in a step of the proof, where one combines "nested" statements (S$_1$ to S$_4$)~\cite{renato}. In the first step, the self-consistency property (SC) implies the following implication 
\begin{equation}
\mbox{S}_a \cap \mbox{S}_b \implies \mbox{S}_c
\label{reason1}
\end{equation}
where $\cap$ denotes logical "and", and the statements are of the type
\begin{description}[labelindent=25pt,style=multiline,leftmargin=1.6cm]
\item [S$_a$] Observer W assigns the truth value "true" to the statement: "A sees $x =ok$";
\item [S$_b$] Observer A assigns the truth value "true" to the statement: "If $x=ok$, then F$_2$ sees $z=+$";
\item [S$_c$] Observer W assigns the truth value "true" to the statement: "A concludes that F$_2$ sees $z=+$".
\end{description}
By repeating reasoning~(\ref{reason1}) in an iterative way, starting from statement S$_4$ to S$_1$, one arrives at a new statement
\begin{description}[labelindent=25pt,style=multiline,leftmargin=1.6cm]
\item [T] Observer W concludes that A concludes that F$_2$ concludes that F$_1$ concludes that $w \neq ok$.
\end{description}
It is important to note that this statement refers to W's conclusion about what other observers conclude when they apply T conditional on the outcomes they observe. It is not a statement  about his directly observed outcome. 

In the second step, the self-consistency property (SC) is used to arrive at an implication of the following type
\begin{equation}
\mbox{T} \implies \mbox{S}.
\label{reason2}
\end{equation}
where the implied statement is 
\begin{description}[labelindent=25pt,style=multiline,leftmargin=1.6cm]
\item [S] Observer W concludes that $w \neq ok$,
\end{description}
which stands in logical contradiction with W's directly observed outcome $w = ok$.

The second step is non-trivial. It enables to promote others' knowledge based on their observations to ones' own knowledge and then to put this "promoted knowledge" in logical comparison with ones' own knowledge gained through direct observation. Through implication (\ref{reason2}), the self-consistency property (SC) enables observational statements of other observers (A, F$_2$ and F$_1$) to be logically compared with ones (W) own. This has the same predictive power as a theoretical framework in which the truth values of statements of different observers can jointly be assigned and compared. To see this, denote statements S$_i$, $i=1,2,3$ as implications S$_1$: (P $\implies$ Q), S$_2$: (Q $\implies$ R) and
S$_3$: (R $\implies$ S), where P: "A sees $x=ok$", Q: "F$_2$ sees $z=+$", R: "F$_1$ sees $r=t$" and S: "W sees $w \neq ok$". Then, "collapsing" others' knowledge into W's knowledge via Eq.~(\ref{reason2}) is equivalent in its implications to considering all the statements as belonging to a single Boolean algebra (i.e. they are now all propositions of observer W, who can apply logical operations on them) for which one can use the transitivity of implication to arrive at [P $\cap$ (P $\implies$ Q) $\cap$ (Q $\implies$ R) $\cap$ (R $\implies$ S)] $\implies$ S. Statement S is again in a logical contradiction to W's directly observed outcome $w = ok$.

We have seen that an existence of a single Boolean algebra for truth values for observational statements of different observers is incompatible with the assumptions of "locality", "freedom of choices" and the predictions of quantum theory, which does not impose any constraints on the objects it is applied to. This might be interpreted as an indication that the strong conclusions implied by the theorem of Ref.~\cite{renner} rely on a too restrictive requirement of property (SC) on a physical theory. The requirement needs not only be fulfilled in quantum theory, but in other physical theories as well. An example was provided by Sudbery~\cite{sudbery}: In the special theory of relativity, due to time dilation, every inertial observer can claim that her/his clock ticks slower than that of a moving partner. This apparent contradiction in predictions of different observers is resolved when one realizes that the statements only have meaning with respect to the specific, observer-dependent measurement procedures that define "simultaneity". Similarly, the states referring to outcomes of different observers in a Wigner-friend type of experiment cannot be defined without referring to the specific experimental arrangements of the observers, in agreement with Bohr's idea of contextuality as formulated by him in 1963~\cite{bohr63}: "the unambiguous account of proper quantum phenomena must, in principle, include a description of all relevant features of experimental arrangement."

I conclude with a remark that the theorem by Frauchiger and Renner has deep conceptual value, as it points to the necessity to differentiate between ones' knowledge about direct observations and ones' knowledge about others' knowledge that is compatible with physical theories. It is likely that understanding this difference will be an important ingredient in further development of the method of Bayesian inference in situations as in the Wigner-friend experiment. 

I acknowledge helpful discussions with Mateus Ara{\'u}jo, Veronika Baumann, Ad{\'a}n Cabello, Giulio Chiribella, Christopher Fuchs, Borivoje Daki{\'c}, Philipp H{\"o}hn, Nikola Paunkovi{\'c}, L{\'i}dia del Rio, R{\"u}diger Schack and Stefan Wolf. I would like to especially acknowledge the fruitful discussions with Renato Renner and thank him for providing notes summarizing that discussion. 

\section{Appendix}
\label{Supply}

The Bell theorem from the main text can be extended to a Greenberger-Horne-Zeilinger (GHZ) version~\cite{ghz} with three friends and three Wigners. Since the incompatibility of assumptions 1.-4. is not of a probabilistic but rather of a deterministic nature, this version of the theorem completely bypasses  any use of the notion of probability, similarly to the version by Frauchiger and Renner~\cite{renner}. 

Consider three spatially separated observers (Wigners), Alice, Bob and Cleve. They each perform a measurement on a subsystem of a tripartite system. Each of the subsystems includes a further observer, Debbie, Eric, and Fiona (Wigner's friends), who perform a Stern-Gerlach measurement of spin along $x$ of their respective spin-1/2 particles. Alice measures Debbie and her spin particle, Bob measures Eric and his spin particle, and finally, Cleve measures Fiona and her spin particle. We consider a GHZ test where Alice chooses between two measurement settings: $A_1$ and $A_2$, Bob between $B_1$ and $B_2$ and Cleve between $C_1$ and $C_2$. The assumptions 2., 3. and 4. imply that $A_1$, $A_2$, $B_1$, $B_2$, $C_1$ and $C_2$ have predefined values $+1$ or $-1$.

Define  $\hat{A}_x=|A_{up}\rangle\langle A_{up}|-|A_{down}\rangle\langle A_{down}|$ and $\hat{A}_y=i(|A_{up}\rangle\langle A_{down}|-|A_{down}\rangle\langle A_{up}|)$ for Alice and similarly $\hat{B}_x$ and $\hat{B}_y$ for Bob and $\hat{C}_x$ and $\hat{C}_y$ for Cleve, where 
\begin{align}
|A_{up}\rangle=|x+\rangle_{A1}|D_{x+}\rangle_{A2}, && |B_{up}\rangle=|x+\rangle_{B1}|E_{x+}\rangle_{B2} &&  |C_{up}\rangle=|x+\rangle_{C1}|F_{x+}\rangle_{C2}, \\
|A_{down}\rangle=|x-\rangle_{A1}|D_{x-}\rangle_{A2}, && |B_{down}\rangle =|x-\rangle_{B1}|E_{x-}\rangle_{B2} &&  |C_{down}\rangle=|x-\rangle_{C1}|F_{x-}\rangle_{C2}.
\end{align}
In the GHZ test, we choose $\hat{A}_1=\hat{A}_x$, $\hat{A}_2=\hat{A}_y$ for Alice and similarly for Bob and Cleve. 

Assume that Alice, Bob and Cleve perform these measurements on a shared GHZ state:
\begin{equation}
\label{GHZ}
|\Psi_{GHZ}\rangle_{ABC}= \frac{1}{\sqrt2} \left(|A+\rangle|B+\rangle |C+\rangle - |A-\rangle |B-\rangle |C-\right),
\end{equation}
where due to assumption 1 we presume that such a state can be prepared and $|A \pm \rangle = \frac{1}{\sqrt{2}}(|A_{up}\rangle  \pm |A_{down}\rangle)$, $|B \pm \rangle = \frac{1}{\sqrt{2}}(|B_{up}\rangle  \pm |B_{down}\rangle)$ and $|C\pm \rangle = \frac{1}{\sqrt{2}} (|C_{up}\rangle  \pm |C_{down}\rangle)$.

In order to reproduce perfect correlations in the GHZ state, the predefined values need to satisfy $A_x B_y C_y=A_y B_x C_y=A_y B_y C_x=1$. These equations imply then that $A_x B_x C_x =1$, however one finds the opposite result in quantum mechanics: $\hat{A}_x \hat{B}_x \hat{C}_x|\Psi_{GHZ}\rangle_{ABC}=-|\Psi_{GHZ}\rangle_{ABC}$.

\end{document}